\definecolor{darkgreen}{rgb}{0.0,0,0.9}
\renewcommand{\vec}[1]{\mathbf{#1}}
\newtheorem{theorem}{Theorem}[section]
\newtheorem{lemma}{Lemma}[section]
\newtheorem{definition}{Definition}[section]
\newcommand{\bvpg}{\textsc{B$_1$-VPG }}
\newcommand{\bKvpg}{\textsc{B$_k$-VPG }}
\newcommand{\piviK}{$\mathcal{P}$ }
\newcommand{\gr}{\texttt{grid}}
\newcommand{\pat}{\texttt{paths}}
\newcommand{\mds}{\textsc{Minimum Dominating Set }}
\newcommand{\mwis}{\textsc{Maximum-Weighted Independent Set }}
\newcommand{\apx}{\textsc{APX}-hard }
\newcommand{\cor}{\texttt{corner}}
\title{A Note on Approximating Weighted Independence on Intersection Graphs of Paths on a Grid}
\author{Saeed Mehrabi}
\affil{\small{School of Computer Science

					Carleton University, Ottawa, Canada

					\url{mehrabi235@gmail.com}
					}
}
\date{}
\begin{document}

\maketitle

\begin{abstract}
A graph $G$ is called \emph{B$_k$-VPG}, for some constant $k\geq 0$, if it has a string representation on an axis-parallel grid such that each vertex is a path with at most $k$ bends and two vertices are adjacent in $G$ if and only if the corresponding paths intersect each other. The part of a path that is between two consecutive bends is called a \emph{segment} of the path.
In this paper, we study the \mwis problem on \bKvpg graphs. The problem is known to be \textsc{NP}-complete on \bvpg graphs, even when the two segments of every path have unit length~\cite{LahiriMS15}, and $O(\log n)$-approximation algorithms are known on \bKvpg graphs, for $k\leq 2$~\cite{DerkaB16,arXivMehrabi2017}. In this paper, we give a $(ck+c+1)$-approximation algorithm for the problem on \bKvpg graphs for any $k\geq 0$, where $c>0$ is the length of the longest segment among all segments of paths in the graph. Notice that $c$ is not required to be a constant; for instance, when $c\in O(\log \log n)$, we get an $O(\log \log n)$-approximation or we get an $O(1)$-approximation when $c$ is a constant. To our knowledge, this is the first $o(\log n)$-approximation algorithm for a non-trivial subclass of \bKvpg graphs.
\end{abstract}

\section{Introduction}
\label{sec:introduction}
In this paper, we study the \mwis problem on \bKvpg graphs. A graph is said to have a \emph{Vertex intersection of Paths in a Grid} (VPG representation, for short), if its vertices can be represented as simple paths on an axis-parallel grid such that two vertices are adjacent if and only if the corresponding paths share at least one grid node. Although VPG graphs were considered a while ago when studying string graphs~\cite{MiddendorfP92}, they were formally investigated by Asinowski et al.~\cite{AsinowskiCGLLS12}. Since then much of the work has focused on studying subclasses of VPG graphs; in particular, restricting the type of paths that are allowed. A turn of a path at a grid node is called a \emph{bend} and a VPG graph is called \emph{B$_k$-VPG graph} if every path has at most $k$ bends.
\begin{definition}[\bKvpg Graph]
\label{def:B1VPG}
A graph $G=(V, E)$ is called a \emph{\bKvpg graph}, if every vertex $u$ of $G$ can be represented as a \emph{path} $P_u$ on an axis-parallel grid $\mathcal{G}$ such that \begin{inparaenum}[(i)] \item $P_u$ has at most $k$ bends, and \item two paths $P_u$ and $P_v$ intersect each other at a grid node if and only if $(u, v)\in E$. \end{inparaenum}
\end{definition}

We remark that by intersecting each other in Definition~\ref{def:B1VPG}, we \emph{include} the case where two paths only \emph{touch} each other. Figure~\ref{fig:sampleGraphs} shows a graph with its \bKvpg representations for $k=1$.

\begin{figure}[t]
\centering
\includegraphics[width=0.70\textwidth]{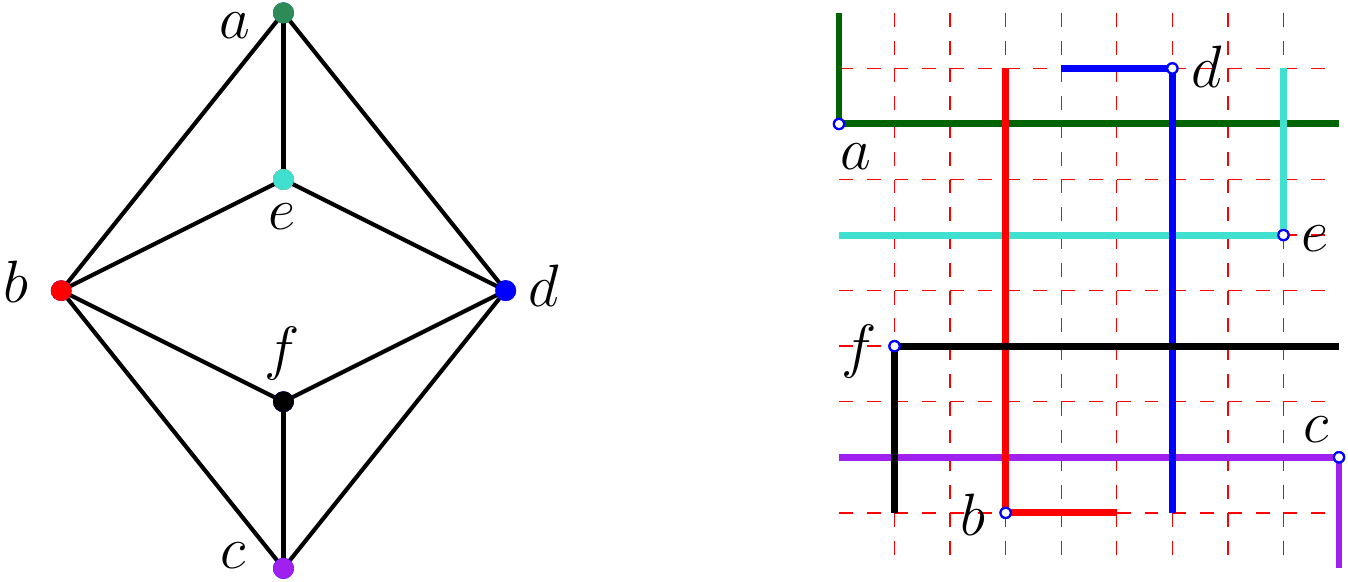}
\caption{A graph on six vertices (left) with its \bKvpg (right) representation for $k=1$.}
\label{fig:sampleGraphs}
\end{figure}

Let $G=(V, E)$ be an undirected graph. A set $S\subseteq V$ is an independent set if no two vertices in $S$ are adjacent. In the \mwis problem, each vertex is assigned a weight and the objective is to compute an independent set of $G$ whose weight is maximum over all independent sets in $G$. The weight of an independent set is defined as the sum of the weights of vertices in the independent set. The \mwis is well known to be \textsc{NP}-hard, and $n^{1-\epsilon}$-hard to approximate for any $\epsilon>0$ unless \textsc{NP}=\textsc{ZPP}~\cite{Hastad01}.

\paragraph{Related work.} Much of the early work is focused on deciding the existence or recognition of VPG representations of graphs~\cite{EhrlichET76,GolumbicLS09,ChalopinGO10,ChaplickU13}. Recently, however, there are works done on designing approximation algorithms for optimization problems on \bKvpg graphs. For instance, \mds problem is \apx on one-string \bvpg graphs\footnote{A \bvpg graph is called \emph{one-string} if every pair of strings that are adjacent in the graph intersect each other exactly once.} due to the facts that every circle graph is a one-string \bvpg graph~\cite{AsinowskiCGLLS12} and that \mds is \apx on circle graphs~\cite{DamianP06}. Moreover, Mehrabi~\cite{waoaPaper2017} studied the \mds problem on some subclass of \bvpg graphs and gave an $O(1)$-approximation algorithm for this problem (see also~\cite{arXivMehrabi2017}). For the \mwis problem, the decision version was shown to be \textsc{NP}-complete on \bvpg graphs by Lahiri et al.~\cite{LahiriMS15} who also gave an $O((\log n)^2)$-approximation algorithm for this problem; notice that the best known algorithm for \mwis on arbitrary string graphs has an approximation factor $n^{\epsilon}$, for some $\epsilon>0$~\cite{FoxP11}. Moreover, $O(\log n)$-approximation algorithms exist for the problem on \bKvpg graphs, for $k\leq 2$~\cite{DerkaB16,arXivMehrabi2017}. It has been asked several times~\cite{LahiriMS15,DerkaB16,arXivMehrabi2017} whether there exist a constant-factor (or even an $o(\log n)$-factor) approximation algorithm for the \mwis problem on \bKvpg graphs.

\paragraph{Our result.} As the first step towards obtaining $o(\log n)$-approximations, we consider the problem on \bKvpg graphs for which the longest segment among all segments of paths in the graph has length $c$, for some $c>0$. We give a polynomial-time $(ck+c+1)$-approximation algorithm for the problem on such graphs (Section~\ref{sec:mWIS}). Notice that $c$ is not required to be a constant; for instance, when $c\in O(\log \log n)$, we get an $O(\log \log n)$-approximation or we get an $O(1)$-approximation when $c$ is a constant. To our knowledge, this is the first $o(\log n)$-approximation algorithm for a non-trivial subclass of \bKvpg graphs.

Our algorithm is based on a linear programming formulation of the problem and then applying the \emph{local ratio} technique~\cite{BarYehudaE85} to the weight function with respect to a feasible solution of the linear program. The algorithm uses a rounding lemma, which exploits the properties of \bKvpg graphs. This will then be used to decompose the weight function for subsequent recursive steps of the algorithm.

\section{Preliminaries}
\label{sec:prelimins}
We denote the string representation of a \bKvpg graph $G=(V, E)$ by $\langle$\piviK$,\mathcal{G}\rangle$, where \piviK is the collection of paths corresponding to the vertices of $G$ and $\mathcal{G}$ is the underlying grid. We might sometimes violate the wording and say \emph{path(s) in $G$} to actually refer to the vertices in $G$ corresponding to the paths in \piviK. Since the recognition problem is \textsc{NP}-hard on such graphs~\cite{PergelR16,ChaplickJKV12}, we assume throughout this paper that a string representation of a \bKvpg graph is always given as part of the input (in addition to $G$). We denote the path in \piviK corresponding to a vertex $u\in V$ by $P_u$. We say that two paths $P_u$ and $P_v$ are adjacent if the vertices $u$ and $v$ are adjacent in $G$. Let $P$ be a path in \piviK. We call the common endpoint of two consecutive segments of $P$ a \emph{corner} of $P$ and denote it by $\cor(P)$. Let $N[P]$ denote the set of paths adjacent to $P$; we assume that $P\in N[P]$. Moreover, for a set $S\subseteq$\piviK of paths, define $N[S]:=\cup_{P\in S}N[P]$. Finally, we denote the $x$- and $y$-coordinates of a point $p$ in the plane by $x(p)$ and $y(p)$, respectively.

Let $\vec{w}\in \mathbb{R}^n$ be a weight vector, and let $F$ be a set of feasibility constraints on vectors $\vec{x}\in \mathbb{R}^n$. A vector $\vec{x}\in\mathbb{R}^n$ is a feasible solution to a given problem $(F, \vec{p})$ if it satisfies all of the constraints in $F$. The value of a feasible solution $\vec{x}$ is the inner product $\vec{w}\cdot\vec{x}$. A feasible solution is \emph{optimal} for a maximization (resp., minimization) problem if its value is maximal (resp., minimal) among all feasible solutions. A feasible solution $\vec{x}$ for a maximization (resp., minimization) problem is an $\alpha$-approximation solution, or simply an $\alpha$-approximation, if $\vec{w}\cdot\vec{x}\geq \alpha\cdot\vec{w}\cdot\vec{x}^*$ (resp., if $\vec{w}\cdot\vec{x}\leq \alpha\cdot\vec{w}\cdot\vec{x}^*$), where $\vec{x}^*$ is an optimal solution. An algorithm is said to have an \emph{approximation factor} of $\alpha$ if it always computes $\alpha$-approximation solutions.

Our $(ck+c+1)$-approximation algorithm for the \mwis problem on \bKvpg graphs is based on the \emph{local ratio} technique, which was first developed by Bar-Yehuda and Even~\cite{BarYehudaE85}. Let us formally state the local ratio theorem.
\begin{theorem}(Local Ratio~\cite{BarYehudaE85})
\label{thm:localRatio}
Let $F$ be a set of constraints, and let $\vec{w}, \vec{w}_1$ and $\vec{w}_2$ be weight vectors where $\vec{w}=\vec{w}_1+\vec{w}_2$. If $\vec{x}$ is an $\alpha$-approximation solution with respect to $(F, \vec{w}_1)$ and with respect to $(F, \vec{w}_2)$, then $\vec{x}$ is an $\alpha$-approximation solution with respect to $(F, \vec{w})$.
\end{theorem}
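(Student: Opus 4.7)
The plan is to unpack the decomposition $\vec{w}=\vec{w}_1+\vec{w}_2$ inside the inner product $\vec{w}\cdot\vec{x}$, apply the two approximation hypotheses termwise, and then reassemble using the optimality assumption on the full problem. I would handle the maximization case first and then note that the minimization case is entirely symmetric.

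Concretely, let $\vec{x}^*$ be an optimal solution of $(F,\vec{w})$, and for $i\in\{1,2\}$ let $\vec{x}_i^*$ be an optimal solution of $(F,\vec{w}_i)$. The first step is the identity
\[
\vec{w}\cdot\vec{x} \;=\; \vec{w}_1\cdot\vec{x} + \vec{w}_2\cdot\vec{x},
\]
which is immediate from bilinearity. Next, the two hypotheses say that $\vec{x}$ is an $\alpha$-approximation with respect to each $\vec{w}_i$, so
\[
\vec{w}_1\cdot\vec{x} + \vec{w}_2\cdot\vec{x} \;\geq\; \alpha\,\vec{w}_1\cdot\vec{x}_1^* + \alpha\,\vec{w}_2\cdot\vec{x}_2^*.
\]
Finally, since $F$ is a common feasibility set and $\vec{x}^*$ satisfies all constraints in $F$, the vector $\vec{x}^*$ is also a feasible candidate for each subproblem $(F,\vec{w}_i)$; optimality of $\vec{x}_i^*$ therefore yields $\vec{w}_i\cdot\vec{x}_i^*\geq \vec{w}_i\cdot\vec{x}^*$. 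Substituting and using $\alpha\geq 0$ gives $\vec{w}\cdot\vec{x}\geq \alpha(\vec{w}_1\cdot\vec{x}^*+\vec{w}_2\cdot\vec{x}^*)=\alpha\,\vec{w}\cdot\vec{x}^*$, which is the desired conclusion.

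For the minimization case, I would mirror the same chain with $\leq$ in place of $\geq$: the approximation hypotheses give $\vec{w}_i\cdot\vec{x}\leq \alpha\,\vec{w}_i\cdot\vec{x}_i^*$, and optimality gives $\vec{w}_i\cdot\vec{x}_i^*\leq \vec{w}_i\cdot\vec{x}^*$, and summing recovers $\vec{w}\cdot\vec{x}\leq \alpha\,\vec{w}\cdot\vec{x}^*$.

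There is essentially no obstacle here; the proof is a one-line calculation. The only subtle point worth being explicit about is that the three problems $(F,\vec{w})$, $(F,\vec{w}_1)$, $(F,\vec{w}_2)$ share the \emph{same} feasibility set $F$, which is what lets the optimum of the combined problem serve as a feasible competitor against each $\vec{x}_i^*$. One should also silently use that $\alpha$ is nonnegative so that multiplying the optimality inequality by $\alpha$ preserves its direction, which is consistent with the way approximation factors are used throughout the paper.
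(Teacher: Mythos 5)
Your proof is correct: the paper states this theorem only as a cited result from Bar-Yehuda and Even and gives no proof of its own, and your argument is precisely the standard one for it. The single essential observation --- that the optimum $\vec{x}^*$ of $(F,\vec{w})$ is feasible for each subproblem $(F,\vec{w}_i)$ and hence satisfies $\vec{w}_i\cdot\vec{x}_i^*\geq \vec{w}_i\cdot\vec{x}^*$ --- is exactly the right one, and your explicit remarks about the shared feasibility set $F$ and the nonnegativity of $\alpha$ cover the only points where care is needed.
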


We now describe how the local ratio technique is usually used for solving a problem. First, the solution set is empty. The idea is to find a decomposition of the weight vector $\vec{w}$ into $\vec{w}_1$ and $\vec{w}_2$ such that $\vec{w}_1$ is an ``easy'' weight function in some sense (we will discuss this in more details later). The local ratio algorithm continues recursively on the instance $(F, \vec{w}_2)$. We assume inductively that the solution returned recursively for the instance $(F, \vec{w}_2)$ is a good approximation and will then prove that it is also a good approximation for $(F, \vec{w})$. This requires proving that the solution returned recursively for the instance $(F, \vec{w}_2)$ is also a good approximation for the instance $(F, \vec{w}_1)$. This step is usually the main part of the proof of the approximation factor.

\section{Approximation Algorithm}
\label{sec:mWIS}
In this section, we give a $(ck+c+1)$-approximation algorithm for the \mwis problem on \bKvpg graphs, for any $k\geq 0$, where $c>0$ is the length of the longest segment among all segments of paths in the graph; notice that $c$ is not required to be a constant. The algorithm is based on rounding a fractional solution derived from a linear programming relaxation of the problem. The standard linear programming relaxation of the \mwis problem is as follows. For each path $P\in\mathcal{P}$, we define an indicator variable $x(P)$. The path $P$ is in the independent set if and only if $x(P)=1$. The integer program assigns the binary values to the paths with the constraint that for each clique $\mathcal{Q}$, the sum of the values assigned to all paths in $\mathcal{Q}$ is at most 1. By relaxing the integer constraint, we get the following linear program in which $w(P)$ denotes the weight of $P$.
\begin{align}
\label{aln:LP}
\text{maximize }        & \sum_{P\in\mathcal{P}}w(P)\cdot x(P)\\
\nonumber \text{subject to }      & \sum_{P\in \mathcal{Q}}x(P)\leq 1 & \forall \mbox{ cliques } \mathcal{Q}\in G,\\
\nonumber                         & x(P)\geq 0 & \forall P\in\mathcal{P}.
\end{align}

Let $\vec{x}$ denote the indicator vector of variables. Notice that any independent set in $G$ gives a feasible integral solution to the linear program. Therefore, the value of an optimal (not necessarily integer) solution to the linear program is an upper bound on the value of an optimal integral solution. The linear program~\eqref{aln:LP} might not have a polynomial number of constraints in general, however, we show that having a polynomial number of constraints in~\eqref{aln:LP} can be guaranteed for \bKvpg graphs. We first need some definitions. For a path $P\in\mathcal{P}$, let $\gr(P)$ be the set of grid points on which $P$ lies; that is, a grid point $t$ is in $\gr(P)$ if and only if $P$ contains $t$. Moreover, we define
\[
\gr(G):=\bigcup_{P\in\mathcal{P}}\gr(P).
\]

Consider any clique $\mathcal{Q}$ in $G$. Observe that the paths in $\mathcal{Q}$ must have at least one grid point in common; that is,
\[
\bigcap_{P\in\mathcal{Q}}\gr(P)\ne\emptyset.
\]
This means that every clique in $G$ is defined by some grid point. For a grid point $t$, let $\pat(t)$ denote the set of all paths in $G$ that contain $t$. Then, we can re-formulate~\eqref{aln:LP} as follows:
\begin{align}
\label{aln:polyLP}
\text{maximize }        & \sum_{P\in\mathcal{P}}w(P)\cdot x(P)\\
\nonumber \text{subject to }      & \sum_{P\in \pat(t)}x(P)\leq 1 & \forall \mbox{ points } t\in \gr(G),\\
\nonumber                         & x(P)\geq 0 & \forall P\in\mathcal{P}.
\end{align}

Since each path $P$ has at most $k+1$ segments and each segment has length at most $c$, we have $|\gr(P)|\leq ck+c+1$ and so $|\gr(G)|\leq (ck+c+1)n$. As such, the number of constraints of~\eqref{aln:polyLP} is polynomial in $n$. Therefore, an optimal solution to~\eqref{aln:polyLP} can be computed in polynomial time. The key to our rounding algorithm is the following lemma.
\begin{lemma}
\label{lem:oneExists}
Let $\vec{x}$ be a feasible solution to the program~\eqref{aln:LP}. Then, for any path $P\in\mathcal{P}$
\[
\sum_{P'\in N[P]}x(P')\leq ck+c+1.
\]
\end{lemma}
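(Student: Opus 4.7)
The plan is to bound $|\gr(P)|$ from above and then invoke the LP clique constraints at each grid point of $P$.

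First, I would count the grid points on $P$. Since $P$ has at most $k$ bends, it consists of at most $k+1$ segments. Each segment has length at most $c$, which means each segment contains at most $c+1$ grid points. The $k$ corners are shared between consecutive segments, so a simple double-counting gives
\[
|\gr(P)| \;\leq\; (k+1)(c+1) - k \;=\; ck + c + 1.
\]

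Next, I would observe that every path $P' \in N[P]$ must share at least one grid point with $P$ (this is the definition of adjacency, and it includes $P$ itself via the trivial intersection). Hence for every $P' \in N[P]$ there exists some $t \in \gr(P)$ with $P' \in \pat(t)$. Using nonnegativity of $\vec{x}$, this lets me sum over the grid points of $P$ and pick up every contribution at least once:
\[
\sum_{P' \in N[P]} x(P') \;\leq\; \sum_{t \in \gr(P)} \; \sum_{P' \in \pat(t)} x(P').
\]

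Finally, I would apply the LP constraint from~\eqref{aln:polyLP} at each grid point $t \in \gr(P)$, which gives $\sum_{P' \in \pat(t)} x(P') \leq 1$. Combined with the bound on $|\gr(P)|$, this yields
\[
\sum_{P' \in N[P]} x(P') \;\leq\; |\gr(P)| \;\leq\; ck + c + 1,
\]
as required. The one subtle point is the switch from \eqref{aln:LP} to \eqref{aln:polyLP}: since the lemma is stated for feasibility in \eqref{aln:LP}, I would briefly note that each single grid point $t$ defines a clique (namely $\pat(t)$), so the point-based constraints used here are instances of the clique constraints and hold for any feasible $\vec{x}$. No step looks like a real obstacle; the argument is essentially a union bound over $\gr(P)$, and the mild care needed is in the grid-point count and in observing that the double-counting of paths hitting $P$ at multiple grid points only helps the inequality.
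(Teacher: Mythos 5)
Your proof is correct and follows essentially the same route as the paper: bound $|\gr(P)|$ by $ck+c+1$ and apply the per-grid-point clique constraint at each $t\in\gr(P)$. The only cosmetic difference is that you use a union bound via nonnegativity of $\vec{x}$ where the paper partitions $N[P]$ by an arbitrarily chosen shared grid point (your version is, if anything, slightly cleaner, and your explicit remark that $\pat(t)$ is a clique correctly bridges \eqref{aln:LP} and \eqref{aln:polyLP}).
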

\begin{proof}
Take any path $P$ in $\mathcal{P}$. Consider the neighbours of $P$ in $\mathcal{P}$ and partition them into at most $ck+c+1$ sets based on the grid point they share with $P$. If a path $P'\in N[P]$ shares more than one grid point with $P$, then assign it to (only) one of the partition sets arbitrarily. Consider a grid point $t\in \gr(P)$. We know by~\eqref{aln:polyLP} that
\[
\sum_{P'\in\pat(t)}x(P')\leq 1.
\]
Therefore,
\[
\sum_{P'\in N[P]}x(P')=\sum_{t\in\gr(P)}\sum_{P'\in\pat(t)}x(P')\leq\sum_{t\in\gr(P)}(1)\leq ck+c+1.
\]
\end{proof}

By Lemma~\ref{lem:oneExists}, the rounding algorithm applies a local ratio decomposition of the weight vector $\vec{w}$ with respect to $\vec{x}$. See Algorithm~\ref{alg:approxMWISAlg}. Clearly, the set $S$ returned by the algorithm is an independent set. The following lemma establishes the approximation factor of the algorithm.

\begin{algorithm}[t]
\caption{\textsc{ApproximateMWISon\bKvpg}($\langle$\piviK$,\mathcal{G}\rangle$)}
\label{alg:approxMWISAlg}
\begin{algorithmic}[1]
\State Delete all paths with non-positive weight. If no paths remain, then return the empty set.
\State Let $P\in$\piviK be a path satisfying
\[
\sum_{P'\in N[P]}x(P)\leq ck+c+1.
\]
Then, decompose $\vec{w}$ into $\vec{w}:=\vec{w_1}+\vec{w_2}$ as follows:
\[   
w_1(P') = 
     \begin{cases}
       w(P) &\quad\text{if } P'\in N[P],\\
       0 &\quad\text{otherwise.}\\
     \end{cases}
\]
\State Solve the problem recursively using $\vec{w_2}$ as the weight vector. Let $S'$ be the independent set returned.
\State If there exists at least one path in $S'$ that is adjacent to $P$, then return $S:=S'$; otherwise, return $S:=S'\cup P$.
\end{algorithmic}
\end{algorithm}

\begin{lemma}
\label{lem:approxFactor}
Let $\vec{x}$ be a feasible solution to~\eqref{aln:polyLP}. Then, $w(S)\geq \frac{1}{ck+c+1}(\vec{w}\cdot\vec{x})$.
\end{lemma}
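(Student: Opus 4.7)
The plan is to prove the lemma by induction on the depth of the recursion of Algorithm~\ref{alg:approxMWISAlg}, following the standard local ratio paradigm. For the base case, after Step~1 no paths remain, so the algorithm returns $S=\emptyset$. All paths that were present at this level had non-positive weight, so using $x(P)\geq 0$ we get $\vec{w}\cdot\vec{x}\leq 0$, and $w(S)=0\geq \frac{1}{ck+c+1}(\vec{w}\cdot\vec{x})$ holds trivially.

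For the inductive step I would first observe that the deletion in Step~1 is harmless: a deleted path $P$ has $w(P)\leq 0$ and $x(P)\geq 0$, so it contributes non-positively to $\vec{w}\cdot\vec{x}$. Hence it suffices to prove the bound on the set $\mathcal{P}'$ of surviving paths, every one of which has strictly positive weight. On $\mathcal{P}'$ the algorithm selects some $P$, forms the decomposition $\vec{w}=\vec{w_1}+\vec{w_2}$ described in Step~2, recursively produces $S'$, and outputs either $S=S'$ or $S=S'\cup\{P\}$. Rather than invoking Theorem~\ref{thm:localRatio} as a black box (it is phrased with respect to the integer optimum, whereas our target is an arbitrary feasible fractional $\vec{x}$), I would prove the two additive bounds $w_1(S)\geq \frac{1}{ck+c+1}(\vec{w_1}\cdot\vec{x})$ and $w_2(S)\geq \frac{1}{ck+c+1}(\vec{w_2}\cdot\vec{x})$ separately and then sum them.

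The $\vec{w_2}$ inequality is the easy one: since $P\in N[P]$, the definition gives $w_2(P)=w(P)-w(P)=0$, so $w_2(S)=w_2(S')$ regardless of whether Step~4 inserts $P$, and the inductive hypothesis applied to the recursive call yields $w_2(S')\geq \frac{1}{ck+c+1}(\vec{w_2}\cdot\vec{x})$. The crux is the $\vec{w_1}$ inequality, and this is where Lemma~\ref{lem:oneExists} does the real work. By Step~4, we always have $S\cap N[P]\neq \emptyset$ (either $S'$ already contains a neighbour of $P$, or we add $P$ itself), and every member of $N[P]$ has $w_1$-weight exactly $w(P)>0$, so $w_1(S)\geq w(P)$. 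On the other hand, pulling $w(P)$ out and applying Lemma~\ref{lem:oneExists} gives
\[
\vec{w_1}\cdot\vec{x}=w(P)\sum_{P'\in N[P]}x(P')\leq (ck+c+1)\,w(P),
\]
so $w_1(S)\geq w(P)\geq \frac{1}{ck+c+1}(\vec{w_1}\cdot\vec{x})$. Adding the two bounds closes the induction.

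The main obstacle I anticipate is not conceptual but bookkeeping: the LP solution $\vec{x}$ is fixed on the original path set while the weight vector mutates along the recursion and Step~1 deletes some paths along the way. The observation that deleted paths contribute non-positively to $\vec{w}\cdot\vec{x}$, together with the fact that $\vec{x}$ restricted to any subset of paths remains feasible for the LP on that subset, lets the induction go through uniformly, with the ratio $ck+c+1$ being dictated entirely by Lemma~\ref{lem:oneExists}.
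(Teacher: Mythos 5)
Your proposal is correct and follows essentially the same route as the paper: induction on the recursion, the observation that $w_2(P)=0$ lets the inductive hypothesis handle the $\vec{w_2}$ part, and Lemma~\ref{lem:oneExists} combined with $S\cap N[P]\neq\emptyset$ handles the $\vec{w_1}$ part, with the two bounds summed directly rather than citing Theorem~\ref{thm:localRatio}. Your treatment of the base case and of the Step~1 deletions is in fact slightly more explicit than the paper's, but the argument is the same.
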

\begin{proof}
We prove the lemma by induction on the number of recursive calls. In the base case, the set returned by the algorithm satisfies the lemma because no vertices are remained. Moreover, the first step that removes all vertices with non-positive weight cannot decrease the right-hand side of the above inequality. We next prove the induction step.

Suppose that $\vec{z}$ and $\vec{z'}$ correspond to the indicator vectors for $S$ and $S'$, respectively. By induction, $\vec{w_2}\cdot\vec{z'}\geq\frac{1}{ck+c+1}(\vec{w_2}\cdot\vec{x})$. Since $w_2(P)=0$, we have $\vec{w_2}\cdot\vec{z}\geq\frac{1}{ck+c+1}(\vec{w_2}\cdot\vec{x})$. From the last step of the algorithm, we know that at least one path from $N[P]$ is in $S$ (recall that we assumed $P\in N[P]$) and so we have
\[
\vec{w_1}\cdot\vec{z}=w(P)\sum_{P'\in N[P]}z(P')\geq w(P).
\]
Moreover, by Lemma~\ref{lem:oneExists},
\[
\vec{w_1}\cdot\vec{x}=w(P)\sum_{P'\in N[P]}x(P')\leq (ck+c+1)\cdot w(P).
\]
Therefore,
\begin{align*}
& \vec{w_1}\cdot\vec{x}\leq (ck+c+1)\cdot w(P)\leq (ck+c+1)\cdot(\vec{w_1}\cdot\vec{z})\\
& \Rightarrow \vec{w_1}\cdot\vec{z}\geq\frac{1}{ck+c+1}(\vec{w_1}\cdot\vec{x})\\
& \Rightarrow (\vec{w_1}+\vec{w_2})\cdot\vec{z}\geq\frac{1}{ck+c+1}(\vec{w_1}+\vec{w_2})\cdot\vec{x}\\
& \Rightarrow w(S)\geq\frac{1}{ck+c+1}\vec{w}\cdot\vec{x}.
\end{align*}
This completes the proof of the lemma.
\end{proof}

Since there exists at least one path $P$ for which $w_2(P)=0$ in each recursive step, Algorithm~\ref{alg:approxMWISAlg} terminates in polynomial time. Therefore, by Lemmas~\ref{lem:approxFactor}, we have the main result of this paper.
\begin{theorem}
\label{thm:approxMWIS}
There exists a polynomial-time $(ck+c+1)$-approximation algorithm for the \mwis~problem on \bKvpg graphs, for any $k>0$, where $c>0$ is the length of the longest segment in the graph.
\end{theorem}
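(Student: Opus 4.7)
The plan is to assemble the theorem from the two lemmas already in hand. First I would solve the linear program~\eqref{aln:polyLP} in polynomial time to obtain an optimal feasible vector $\vec{x}$; this is legitimate because each path has at most $(k+1)(c+1)-k = ck+c+1$ grid points, so $|\gr(G)|\leq (ck+c+1)n$, and~\eqref{aln:polyLP} has only a polynomial number of constraints and $n$ variables.

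I would then run Algorithm~\ref{alg:approxMWISAlg} on the given string representation using this $\vec{x}$. Two properties need to be checked. First, the output $S$ is an independent set: by step~4 of the algorithm, the selected path $P$ is inserted only when no neighbour of $P$ already lies in the recursively returned solution $S'$. Second, the algorithm terminates in polynomial time: each recursive call selects some $P$ (Lemma~\ref{lem:oneExists} guarantees that any remaining path is a valid choice, so the selection step is trivially implementable), and for this $P$ we have $w_2(P)=w(P)-w_1(P)=0$, so $P$ becomes non-positive and is removed in step~1 of the next recursive call. Hence the recursion depth is bounded by $n$, and each level does polynomial work.

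For the approximation ratio, Lemma~\ref{lem:approxFactor} gives $w(S)\geq \tfrac{1}{ck+c+1}(\vec{w}\cdot\vec{x})$. Because the $\{0,1\}$-indicator vector of an optimum independent set $S^*$ satisfies the clique constraints of~\eqref{aln:polyLP}, it is a feasible point of the LP, and hence $\vec{w}\cdot\vec{x}\geq w(S^*)=\mathrm{OPT}$ when $\vec{x}$ is chosen as an optimal LP solution. Chaining the two bounds yields $w(S)\geq \tfrac{1}{ck+c+1}\mathrm{OPT}$, which is the desired approximation guarantee.

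The substantive work is already carried out in Lemmas~\ref{lem:oneExists} and~\ref{lem:approxFactor}; at the theorem level the only subtlety is the bookkeeping for polynomial termination, namely the observation that zeroing out $w_2(P)$ in the decomposition really does eliminate $P$ at the next call via the non-positive-weight pruning in step~1. Once this is in place, the theorem follows by a short chain of inequalities.
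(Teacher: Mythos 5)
Your proposal is correct and follows essentially the same route as the paper: solve the polynomial-size LP~\eqref{aln:polyLP}, run Algorithm~\ref{alg:approxMWISAlg}, invoke Lemma~\ref{lem:approxFactor} together with the fact that the LP optimum upper-bounds $\mathrm{OPT}$, and argue termination from $w_2(P)=0$ forcing $P$ to be pruned in the next call. Your added bookkeeping (the count $(k+1)(c+1)-k=ck+c+1$ of grid points per path and the observation that Lemma~\ref{lem:oneExists} makes any remaining path a valid choice in step~2) only makes explicit what the paper leaves implicit.
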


\section{Conclusion}
\label{sec:conclusion}
In this paper, we considered the \mwis problem on \bKvpg graphs. We gave a polynomial-time $(ck+c+1)$-approximation algorithm on \bKvpg graphs, for any $k>0$, where $c$ denotes the length of the longest segment of all paths in the graph. The algorithm relies crucially on the fact that the longest segment in the graph has length $c$. Designing an $o(\log n)$-approximation of a maximum-weighted independent set on any \bKvpg graph (even for $k=1$) remains open.

\bibliographystyle{plain}
\bibliography{ref}

\end{document}